\newtheorem{thm}{Theorem}
\newtheorem{prop}{Proposition}
\newtheorem{corr}{Corollary}
\newtheorem{fact}{Fact}
\newcommand{\be}{\begin{equation}}
\newcommand{\ee}{\end{equation}}
\newcommand{\ben}{\begin{equation*}}
\newcommand{\een}{\end{equation*}}
\newcommand{\mc}{\mathcal}
\newcommand{\kap}{\kappa}
\newcommand{\mbf}[1]{\mathbf{#1}}
\newcommand{\expec}{\mathbb{E}}
\begin{document}
\title{Rewritable Storage Channels with Hidden State}
\author{
\authorblockN{R. Venkataramanan}
\authorblockA{
University of Cambridge, UK \\
Email: ramji.v@eng.cam.ac.uk \vspace{-8pt}}
\and
\authorblockN{S. Tatikonda}
\authorblockA{
Yale University, USA \\
Email: sekhar.tatikonda@yale.edu \vspace{-8pt}}
\and
\authorblockN{L. A. Lastras-Monta{\~n}o, M. Franceschini}
\authorblockA{IBM T. J. Watson Research Center, USA\\
Email: \{lastrasl,franceschini\}@us.ibm.com \vspace{-8pt}
}
\thanks{This work was supported in part by NSF grant CCF-1017744 and an IBM faculty award.
A part of the paper appeared in the proceedings of the 2012 IEEE International Symposium on Information Theory \cite{RVRew12}. }
}

\maketitle
\begin{abstract}
Many storage channels  admit reading and rewriting of the content at a given cost. We consider rewritable channels with a hidden state which models the unknown characteristics of the memory cell.  In addition to mitigating the effect of the write noise, rewrites can help the write controller  obtain a better estimate of the hidden state. The paper has two contributions. The first is  a lower bound on the capacity of a general rewritable channel with hidden state. The lower bound is obtained using a coding scheme that combines Gelfand-Pinsker coding with superposition coding. The rewritable AWGN channel is discussed as an example. The second contribution is a simple coding scheme for  a rewritable channel where the write noise and hidden state are both uniformly distributed. It is shown that this scheme is asymptotically  optimal as the number of rewrites gets large.\end{abstract}

\section{Introduction} \label{sec:intro}
 In  nonvolatile memory technologies, the write mechanism is commonly impaired by {write noise} due to which the  value written on a cell is  different from the one intended. An important feature of many of these technologies such as Flash \cite{Flash03}, Phase Change Memory \cite{PCM10} and Resistive RAM \cite{Memris08,RRam07} is that they allow rewriting, i.e. the value written on a  memory cell can be read and rewritten if necessary. Rewrites can increase the storage capacity but are costly since they are time consuming and degrade the memory. Hence there is a fundamental trade-off between the number of writes and the amount of information that can be stored in a memory cell.

 Given a memory array of $n$ cells, the goal is to maximize the number of distinct messages that can be reliably encoded in the array, subject to a constraint on the average or maximum number of writes per cell. The cells are assumed to be statistically independent.
Rewritable channels were introduced in \cite{Rew08} and subsequently studied in \cite{DatDepN,RewSuperpos,UnifCap10,RewJour12} under an average cost constraint. Maximum cost constrained rewritable channels were considered in \cite{RewMaxCost,BunteLap10,BunteLap11,WeissAction}.

In practice, a memory cell is an amalgam of physical components which reacts to inputs in some way that designers hope to model as well as possible. However, there are always some unknown characteristics of the cell due, for example, to fabrication variability. These characteristics, which may be too costly to learn, introduce an extra degree of uncertainty into the value written on the cell.  In this paper, we model this effect with the channel $P_{Y|X,S}$ where $X$ is the input stimulus, $S$ is a hidden (unknown) state parameter of the cell and $Y$ is the value stored in the cell. $S$ is assumed to be have known distribution $P_S$. The alphabets of $X,Y,S$ are denoted by $\mc{X},\mc{Y}, \mc{S}$, respectively.
We consider two canonical examples of rewritable channels  in this paper:
\begin{enumerate}
\item \emph{Uniform noise channel with state}: The channel model is
\be
Y= X+ W + S.
\label{eq:unif_state_model}
\ee
$X \in [0,1]$ is the input stimulus, the write noise $W$ is uniformly distributed in $[-a/2,a/2]$,  and the state $S$ is uniformly distributed in $[0,B]$.
$a,B$ are known positive constants. The basic version of this channel ($S=0$) was introduced in  \cite{Rew08} as a simple  model that captures some essential features of non-volatile memories such as analog inputs and bounded write noise.

\item \emph{AWGN channel with state}:
The channel is again described by the additive model \eqref{eq:unif_state_model}. The write noise and the state are Gaussian random variables: $W$ is distributed as
$\mc{N}(0,N)$ and  $S$ is distributed as $\mc{N}(0, \sigma^2_s)$.\footnote{$\mc{N}(\mu,\sigma^2)$ is the Gaussian distribution with mean $\mu$ and variance $\sigma^2$.} The input is constrained in terms of the  either the average or peak power per write.

\end{enumerate}

A key feature of the state $S$ is that it stays \emph{fixed} across write attempts in each cell. Conditioned on $S=s$, the value stored in the cell after each attempt depends only on the most recent input stimulus --  it is determined according to $P_{Y|XS}(.|X,S=s)$, where $X$ is the current input. In the additive model \eqref{eq:unif_state_model}, this means that each write attempt on a cell is affected with an independent realization of the random variable $W$, while $S$  stays {fixed} across write attempts.

In this paper, we consider rewritable channels with a constraint on the the average number of writes per cell. Given a constraint $\kappa$ on the average write cost, the goal is to determine the capacity $C(\kappa)$ and design coding schemes to achieve rates close to $C(\kappa)$.

We consider the following class of coding schemes. To write on cell $i$, the write controller  applies stimuli $X_{i}^{(1)}, X_{i}^{(2)}, \ldots$ until the output falls within a target region $T_i$, where $T_i$ is a subset of the output space. The $k$th write stimulus $X_i^{(k)}$ can depend on the outputs of the previous stimuli, denoted $Y_i^{(1)}, \ldots, Y_i^{(k-1)}$. Formally, a rewrite code of rate $R$ over an array of $n$ cells is defined by:
\begin{itemize}
\item An encoder mapping which maps a message in $\{1,\ldots,2^{nR}\}$ to a sequence
$((X_1,T_1), \ldots, (X_n,T_n))$, where $T_i$ is the target region for cell $i$, and $X_i =(X_{i}^{(1)}, X_{i}^{(2)},\ldots)$ is the input strategy for cell $i$.
\item A decoder which maps the output sequence $(Y_1, \ldots Y_n)$ to $\{1,\ldots,2^{nR}\}$.
\end{itemize}

 For cell $i$, the number of writes needed for the output to fall within  region $T_i$ is a random variable, denoted $\tau_i(X_i, T_i)$, where
 $X_i = (X_{i}^{(1)},X_{i}^{(2)}, \ldots)$ is the input strategy.
The average write-cost of the code is $\frac{1}{n}\sum_{i=1}^n \expec\tau_i(X_i,T_i)$. Due to the statistical independence of the cells, the capacity for an average cost constraint $\kappa$ is \cite{Rew08,UnifCap10}
\be C(\kappa)= \sup_{X,T: \expec\tau(X,T) \leq \kappa} I(XT; Y). \label{eq:gen_cap_formula} \ee

The capacity formula in \eqref{eq:gen_cap_formula} is not easy to compute in general. This is because the optimization is over adaptive strategies where each input stimulus can depend on the outcomes of the previous stimuli. Adaptive strategies are particularly useful in channels with hidden state because we get a better estimate of the state with each write, which can be used to generate the next stimulus.\footnote{For memoryless rewritable channels without state, we can restrict the input strategies to be non-adaptive, i.e. repeatedly apply the same stimulus to a cell until the target region is hit. See \cite{UnifCap10,RewJour12}.} For intuition, consider two extreme cases:
\begin{itemize}
\item When $\kappa=1$, we are allowed only one write attempt and the hidden state is treated as an additional noise variable.

\item When the average cost constraint $\kappa \to \infty$, we can spend a number of write attempts to get a very good estimate of the state, and use the remaining writes to store information by designing the input stimulus to nullify the effect of the state. Thus we expect the storage rate to approach the no-state capacity when $\kappa$ is very large.
\end{itemize}
For $1<\kappa<\infty$, the challenge is to simultaneously learn the state while attempting to store information at a high rate.

The main contributions of this paper are as follows.
\begin{enumerate}
\item In Section \ref{sec:gen_lb}, we derive a capacity lower bound for continuous-output rewritable channels with state.  The scheme used to obtain this bound involves state estimation phase followed by a coding phase. The writing strategy in the coding phase combines two techniques from multi-user information theory: Gelfand-Pinsker coding  \cite{GelfPin80, costa83}  and superposition coding \cite{CoverBC98}. The AWGN rewritable channel is discussed as an example.

    \item In Section \ref{sec:unif_channel}, we focus on the uniform noise channel and present a coding scheme that is computationally simple and amenable to practical implementation. The scheme implicitly combines state estimation and coding, and is shown to be asymptotically optimal as the number of rewrites gets large.
\end{enumerate}

The rewritable channel considered in this paper is a stylized model relevant to technologies like Phase Change Memory and Resistive RAM which have analog outputs. Both these memory technologies are known to be affected by variability across devices \cite{PCM10,lee2010evidence}, which to the first order can be modeled as a hidden state. Though relaxing assumptions such as noiseless reads would make the model more realistic,  we believe that the current model gives useful insights regarding how rewrites can be harnessed to improve the storage density of these memories.

\emph{Notation}: We use upper-case letters to denote random variables and bold-face notation for random vectors.  Entropy and mutual information are measured in bits, and logarithms are with base $2$ unless otherwise mentioned.

\section{Lower Bound  on the Rewrite Capacity} \label{sec:gen_lb}

 For an average write cost  $\kap$, we design a scheme consisting of two phases: an estimation phase of $l$ ($< \kap$) writes  to learn the state $S$, and a coding phase requiring an average of $\kap - l$ writes. For the coding phase, we combine two techniques: 1) Gelfand-Pinsker coding \cite{GelfPin80} which achieves the optimal rate given the state estimate if we are allowed only a single write for the coding phase (i.e., $\kappa -l = 1$), and 2) Superposition coding \cite{CoverBC98} to store an additional $\log(\kap-l)$ bits/cell when $\kap > l+ 1$.

Before presenting the general result, we describe the coding scheme for the AWGN rewritable channel to highlight the main ideas.

\subsection{The AWGN channel with hidden state} \label{subset:awgn}

The channel is defined by \eqref{eq:unif_state_model} with the write noise $W \sim \mc{N}(0,N)$ and the state $S \sim \mc{N}(0, \sigma^2_s)$. We assume that there is an average power-constraint $P$, i.e., in each write attempt the average power of the input stimulus across the $n$ cells is at most $P$.

\emph{State Estimation}: The first step is to construct an estimate of the state of each cell using  $l$ writes. Due to the symmetry of the channel model, this can be done by applying any input, say $c$,  for $l$ writes and recording the outputs $Y^{(1)}, \ldots, Y^{(l)}$ which are generated according to
\be
\begin{split}
Y^{(1)} & = c + W^{(1)} + S \\
&\vdots\\
Y^{(l)} & = c + W^{(l)} + S
\end{split}
\label{eq:state_est}
\ee
where $W^{(1)}, \ldots, W^{(l)}$ are independent $\mc{N}(0,N)$ random variables. The minimum-mean squared error (MMSE) estimate of $S$ given the observations $Y^{(1)}, \ldots, Y^{(l)}$  is
\be
\hat{S}(l) = \expec[ S \mid Y^{(1)}, \ldots, Y^{(l)}]  = \frac{\sigma^2_s}{l \sigma^2_s + N} \sum_{j=1}^l  (Y^{(i)} - c).
\ee

\emph{Encoding}:
The write channel for the $(l+1)$th write can be expressed as
\be
Y^{(l+1)} = X^{(l+1)} + \hat{S}(l) + (S - \hat{S}(l)) + W^{(l+1)}.
\label{eq:est_chmodel}
\ee
The estimate  $\hat{S}(l)$  is known to the encoder prior to the $(l+1)$th write. Further, $(S - \hat{S}(l)) $ is independent of $\hat{S}(l)$ due to the orthogonality principle \cite{Kayv1} and the joint Gaussianity of $(S, \hat{S}(l))$.

 Let us first consider the case where  we use only a single write after the estimation period. For write $l+1$, \eqref{eq:est_chmodel}  describes a channel with state
 $\hat{S}(l)$ known to the encoder and  effective channel noise $S - \hat{S}(l)+ W^{(l+1)}$ which is independent of   $\hat{S}(l)$. The effective channel noise is a Gaussian random variable distributed as $\mc{N}(0, N_{\text{eff}, l})$ where
\be
\begin{split}
N_{\text{eff}, l} & = \expec[(S - \hat{S}(l) + W^{(l+1)})^2]  \\
& = \expec[(S - \hat{S}(l))^2] + \expec[(W^{(l+1)})^2]  = \frac{\sigma_s^2 N}{l \sigma_s^2 +N} + N.
\end{split}
\label{eq:eff_noise_var}
\ee
 The optimal coding scheme for this channel is the `writing on dirty-paper' scheme of Costa \cite{costa83}. The key idea is to incorporate part of the known state $\hat{S}(l)$ into the codeword. This is done by building a codebook over an auxiliary random variable $U \sim \mc{N}(0,  P+ \alpha^2 \sigma^2_{s,l} )$ where
 \begin{align}
 \alpha & = \frac{P}{P + N_{\text{eff}, l}},  \label{eq:alph_def} \\
\sigma^2_{s,l}  & =  \expec[\hat{S}(l)^2] = \frac{l \sigma_s^4}{l \sigma_s^2 + N} . \label{eq:sig_l}
 \end{align}
 
 Let the storage rate be $R$ bits/cell. We build a $U$-codebook with $2^{nR_1}$ codewords whose elements  are generated i.i.d 
 according to $\mc{N}(0,  P+ \alpha^2 \sigma^2_{s,l} )$. The value of $R_1 > R$ will be specified below.  The codebook is divided into $2^{nR}$ bins, with  each bin containing $2^{n(R_1-R)}$ codewords. Each bin represents a message in the set  $\{1, \ldots, 2^{nR}\}$. To transmit message $m$, the encoder attempts to find a codeword $\mathbf{U}$ {within} bin $m$ such that $(\mathbf{U} - \alpha \mbf{\hat{S}}(l) )$ is nearly orthogonal to
$\mbf{\hat{S}}(l)$. Formally, the encoder finds a codeword $\mathbf{U}$ that is jointly typical\footnote{Roughly speaking, sequences $(\mathbf{U}, \mbf{S})$ are jointly typical according to distribution $P$ if their empirical joint distribution is close to  i.i.d.  $P$.} \cite{CoverT} with  $\mbf{\hat{S}}(l)$  according to the distribution described by
\be U = X  + \alpha \hat{S}(l) \label{eq:u_gp} \ee
 where $X \sim \mc{N}(0, P)$ and $\hat{S}(l) \sim \mc{N}(0, \sigma^2_{s,l} )$ are independent Gaussians. From rate-distortion theory, this step will be successful if the number of sequences in each bin  $2^{n(R_1-R)}$ is  larger than $2^{n I(U; \hat{S}(l))}$, where the mutual information computed using the joint distribution described by \eqref{eq:u_gp}. The codeword written on the $n$ cells is
 \be \mbf{X}^{(l+1)} = \mathbf{U}   - \alpha \mbf{\hat{S}}(l).  \ee
Note that $\mbf{X}^{(l+1)}$ has average power nearly $P$, due to \eqref{eq:u_gp}. The sequence stored on the cells is
\be
\mbf{Y}^{(l+1)} = \mbf{X}^{(l+1)}  + \mbf{S} + \mbf{W}^{(l+1)}.
\label{eq:ch_seqs}
\ee

The decoder's task is to decode the codeword $\mbf{U}$ from $\mbf{Y}^{(l+1)}$. The corresponding bin index then gives the message.
If the encoding operation is successful,  $(\mbf{U},   \mbf{Y}^{(l+1)})$ are jointly typical according to
\be
\begin{split}
U & = X  + \alpha \hat{S}(l) ,  \\ Y & = X + \hat{S}(l) + (S-\hat{S}(l) +W),
\end{split}
\label{eq:ch_out}
\ee
where $X \sim \mc{N}(0,P)$, $\hat{S}(l) \sim \mc{N}(0, \sigma^2_{s,l})$, and $(S-\hat{S}(l) +W)  \sim \mc{N}(0, N_{\text{eff}, l})$ are mutually independent random variables.
If we use a maximum-likelihood or joint typicality decoder \cite{CoverT}, the codeword $\mathbf{U}$  can be successfully decoded if $R_1 < I(U;Y)$.

Combining this with the earlier bound $R_1-R > I(U;\hat{S}(l))$, we conclude that rates
\be R < I(U;Y) - I(U;\hat{S}(l)) \ee
are achievable. Computing this with the joint distribution specified in  \eqref{eq:ch_out}, we obtain  that any rate
 \be
 R < \frac{1}{2} \log\left(  1 + \frac{P}{N_{\text{eff}, l}}  \right)
 \label{eq:max_1write}
 \ee
is achievable.

 If we restrict ourselves to a single write after the estimation period, \eqref{eq:max_1write} gives the optimal rate. This is because even when the encoder and decoder  \emph{both} know $\mbf{\hat{S}}(l)$ a priori, the maximum rate is given by \eqref{eq:max_1write}.  (The decoder can simply cancel off any effect of  $\mbf{\hat{S}}(l)$ from the stored value.)  When $\mbf{\hat{S}}(l)$ is not available at the decoder, the Costa coding scheme nullifies its effect by incorporating part of it into the codeword $\mathbf{U}$.

\begin{figure}[t]
\flushleft
\includegraphics[width=3.5in]{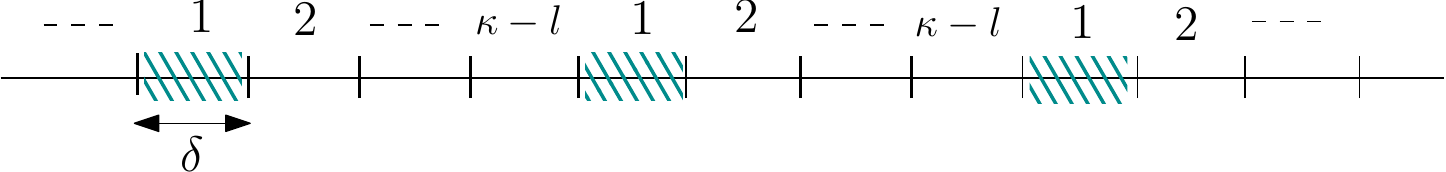}
\caption{\small{Target regions for superposition coding. The shaded regions together form the target region for one message.}}
\vspace{-5pt}
\label{fig:superpos}
\end{figure}

\emph{Superposition}: When $\kappa > l +1$, we have available more than one write after the estimation period. We use  superposition coding  to store additional bits using the remaining  writes. For the sake of intuition, temporarily assume that $\kappa$ is an integer.

The idea  is to partition the output space  ($\mathbb{R}$ for the AWGN channel)  into $(\kappa - l)$ different regions such that the output  is equally likely to fall in each of these regions \emph{regardless} of the input. This is done in the following way. We divide the real line into intervals of length $\delta$, and assign to the intervals labels $1, \ldots, (\kappa-l)$ in succession, as shown in Figure \ref{fig:superpos}.  Target region $1$ is the union of the intervals marked $1$, target region $2$ is the union of the intervals marked $2$, and so on. Formally, we define the target regions $T_j$ for $j=1, \ldots, \kappa-l$ as
\be
T_i= \cup_{i \in \mathbb{Z}} \Big[  ((\kappa -l)i+j)\delta - \delta, \quad ((\kappa -l)i+j)\delta  \Big).
\ee
We let $\delta \to 0$ for reasons explained below.

For each cell $i \in \{ 1, \ldots,n \}$,  the additional information stored is represented by a message $m_i$ drawn uniformly from the set $\in \{1, \ldots,  \kappa -l \}$. At the end of estimation period, the controller uses the state sequence estimate $\hat{\mbf{S}}(l)$ to determine the codeword $\mbf{U} = (U_1, \ldots, U_n)$ of the Costa scheme. It then repeatedly applies stimulus  $X_i$ to cell $i$ until the output falls in the target region $m_i$. Recall that
\[ X_i = U_i - \alpha \hat{S}_i(l).  \]

 In each write attempt, the noise realization is an independent realization of a $\mc{N}(0, N)$ random variable. However, the state estimation error $S-\hat{S}(l)$  remains \emph{constant} across attempts and is unknown to the decoder. Defining each target region as a collection of disjoint infinitesimal intervals ensures that  the output in each write attempt is equally likely to lie in each of the $(\kappa -l)$ target regions, regardless of the value of $S-\hat{S}(l)$. Thus the number of writes required to obtain an output in the desired region $m_i$ is a geometric random variable with mean $(\kappa -l)$.  
 
 The total number of writes (including the estimation period) for cell $i$ is denoted $\tau_i$ and the final value stored in cell $i$ is $Y_i^{(\tau_i)}$. The discussion above shows that 
 \be  \expec[\tau_i] = l + \kappa - l = \kappa. \ee 
 
\emph{Decoding}: The decoder observes the stored sequence
\[ \mbf{Y}^{(\tau)} = ( Y_1^{(\tau_1)}, \ldots, Y_n^{(\tau_n)})\]
and attempts to decode the codeword $\mbf{U}$. The key observation is that $(\mbf{U},  \mbf{Y}^{(l+1)})$  and $(\mbf{U},  \mbf{Y}^{(\tau)})$ have the same joint distribution. This is because  for each cell $i$, the write stimulus and the channel state remain the same for writes $(l+1)$ through $\tau_i$, and the noise realizations  $W^{(l+1)}, \ldots, W^{(\tau_i)}$ are  i.i.d. $\mc{N}(0,N)$. Thus $(\mbf{U},  \mbf{Y}^{(\tau)})$ is jointly typical according to \eqref{eq:ch_out}, and the codeword $\mbf{U}$ can be reliably decoded if $R$ satisfies
\eqref{eq:max_1write}. The target region containing the output of  cell $i$ directly gives the message  $m_i \in \{1, \ldots, \kappa-l \}$ stored in the superposition phase.

When $\kappa$ is not an integer, we can vary the number of target regions across the cells in order to achieve a write-cost of $\kappa$. For example, if
$\kappa =1 + \lambda$, for $\lambda \in (1,2)$, we can code a fraction $\lambda$ of the $n$ cells at  average cost  $2$ and the remaining $n(1-\lambda)$ cells at average cost $1$ to obtain an overall cost of $\kappa= 1+ \lambda$.  Thus the straight line joining the value of  the lower bound at $\kappa=1$ and $\kappa=2$  is a lower bound for $\kappa \in (1,2)$. In general, the convex hull of the rates achieved  at the integer points can be achieved through `cost-sharing'  between cells.

The performance achieved of two-step coding strategy described above is summarized in the following proposition.
\begin{prop}
Consider the channel described by \eqref{eq:unif_state_model}  with state $S \sim \mc{N}(0,\sigma_s^2)$, noise $W \sim \mc{N}(0,N)$ and an average power constraint $P$ on the input. With average cost $\kappa \geq 1$, the rewrite capacity satisfies
\be
\begin{split}
&C(\kappa)  \geq \\
&\text{conv} \left( \max_{l \in \{0,1, \ldots, \lfloor \kappa \rfloor -1 \} }
\left\{  \frac{1}{2} \log  \left( 1 + \frac{P}{N_{\text{eff}, l}}  \right)   + \log \lfloor \kappa - l \rfloor \right\} \right)
\end{split}
\label{eq:awgn_prop}
\ee
where conv denotes the convex hull and
\[  N_{\text{eff}, l} = N \left( 1 + \frac{\sigma_s^2 }{l \sigma_s^2 +N}\right).\]
\label{prop:awgn}
\end{prop}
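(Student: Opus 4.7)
The plan is to construct, for each integer $l \in \{0, 1, \ldots, \lfloor \kap \rfloor - 1\}$, a per-cell scheme that spends the first $l$ writes on state estimation and then runs a combined Gelfand--Pinsker/superposition code in the remaining $\kap - l$ writes. Since the $n$ cells are statistically independent and \eqref{eq:gen_cap_formula} decouples across cells, an achievable single-cell rate of the form $\tfrac{1}{2}\log(1 + P/N_{\text{eff}, l}) + \log \lfloor \kap - l \rfloor$ at expected cost $\kap$ will establish the bound for every integer $l$; the outer maximum and convex hull then follow by picking the best $l$ and time-sharing between cells.

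For the estimation phase, I would apply the constant input $c$ for $l$ writes and take the MMSE estimator $\hat{S}(l)$ displayed in the excerpt. Joint Gaussianity makes $S - \hat{S}(l)$ independent of $\hat{S}(l)$ with variance $\sigma_s^2 N/(l\sigma_s^2 + N)$, so the $(l{+}1)$th write is described by the dirty-paper channel \eqref{eq:est_chmodel} whose effective Gaussian noise has variance $N_{\text{eff},l}$ and whose state $\hat{S}(l)$ is known to the encoder. The Gelfand--Pinsker theorem with the auxiliary variable $U = X + \alpha \hat{S}(l)$ and $\alpha = P/(P + N_{\text{eff},l})$ then yields, by the standard joint-typicality binning-plus-decoding argument sketched around \eqref{eq:u_gp}--\eqref{eq:max_1write}, the Costa rate $I(U;Y) - I(U;\hat{S}(l)) = \tfrac{1}{2}\log(1 + P/N_{\text{eff},l})$.

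To squeeze out an additional $\log M$ bits per cell with $M = \lfloor \kap - l \rfloor$, I would superpose an index $m \in \{1,\ldots,M\}$ via the labeled $\delta$-interval partition of Figure \ref{fig:superpos}: the Costa stimulus $X = U - \alpha \hat{S}(l)$ is retransmitted until the output lands in target region $T_m$. The key quantitative fact, which I would verify by a standard calculation on the Gaussian density, is that for fixed $X$ and $S$ the probability that $X + S + W \in T_j$ tends to $1/M$ uniformly in $j$ as $\delta \to 0$, because the density of $W$ varies negligibly over $M$ adjacent $\delta$-intervals. Hence the number of coding-phase writes is geometric with mean $M$ in the limit, the total expected cost per cell is at most $l + M \leq \kap$, and the superposition index is recovered from $Y^{(\tau)}$ simply by reading off the enclosing region label.

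The main obstacle is justifying that the Costa codeword $\mbf{U}$ remains decodable after superposition, i.e., that the joint law of $(U_i, Y_i^{(\tau_i)})$ in the limit $\delta \to 0$ equals that of $(U_i, Y_i^{(l+1)})$ specified in \eqref{eq:ch_out}. Here one uses that across writes $l+1, \ldots, \tau_i$ the pair $(X_i, S_i)$ is frozen while the noises $W^{(l+1)}, \ldots, W^{(\tau_i)}$ are i.i.d.\ $\mc{N}(0,N)$, and that the equiprobability of the regions makes the stopping time asymptotically independent of the infinitesimal residual position inside a chosen region, so conditioning on $\{Y^{(\tau_i)} \in T_{m_i}\}$ does not perturb the target joint distribution in the limit. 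Joint-typicality decoding then succeeds whenever $R < \tfrac{1}{2}\log(1 + P/N_{\text{eff},l})$. Finally, for non-integer $\kap$, the cost-sharing construction indicated in the text (coding a fraction $\lambda$ of cells at cost $l+M+1$ and the rest at cost $l+M$) produces every point in the convex hull of the integer-$\kap$ rates, yielding \eqref{eq:awgn_prop}.
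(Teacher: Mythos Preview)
Your proposal is correct and follows essentially the same approach as the paper: an $l$-write MMSE estimation phase, Costa's dirty-paper code on the residual channel \eqref{eq:est_chmodel} to get $\tfrac{1}{2}\log(1+P/N_{\text{eff},l})$, superposition over the $\delta$-interval target regions of Figure~\ref{fig:superpos} for the extra $\log\lfloor\kappa-l\rfloor$ bits, and cost-sharing across cells for the convex hull. Your treatment of the decodability obstacle---arguing that the $\delta\to 0$ equiprobability forces $(U_i,Y_i^{(\tau_i)})$ to have the same law as $(U_i,Y_i^{(l+1)})$---is exactly the ``key observation'' the paper invokes, only spelled out a bit more carefully.
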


\begin{figure}[t]
\flushleft
\includegraphics[height=2.3in]{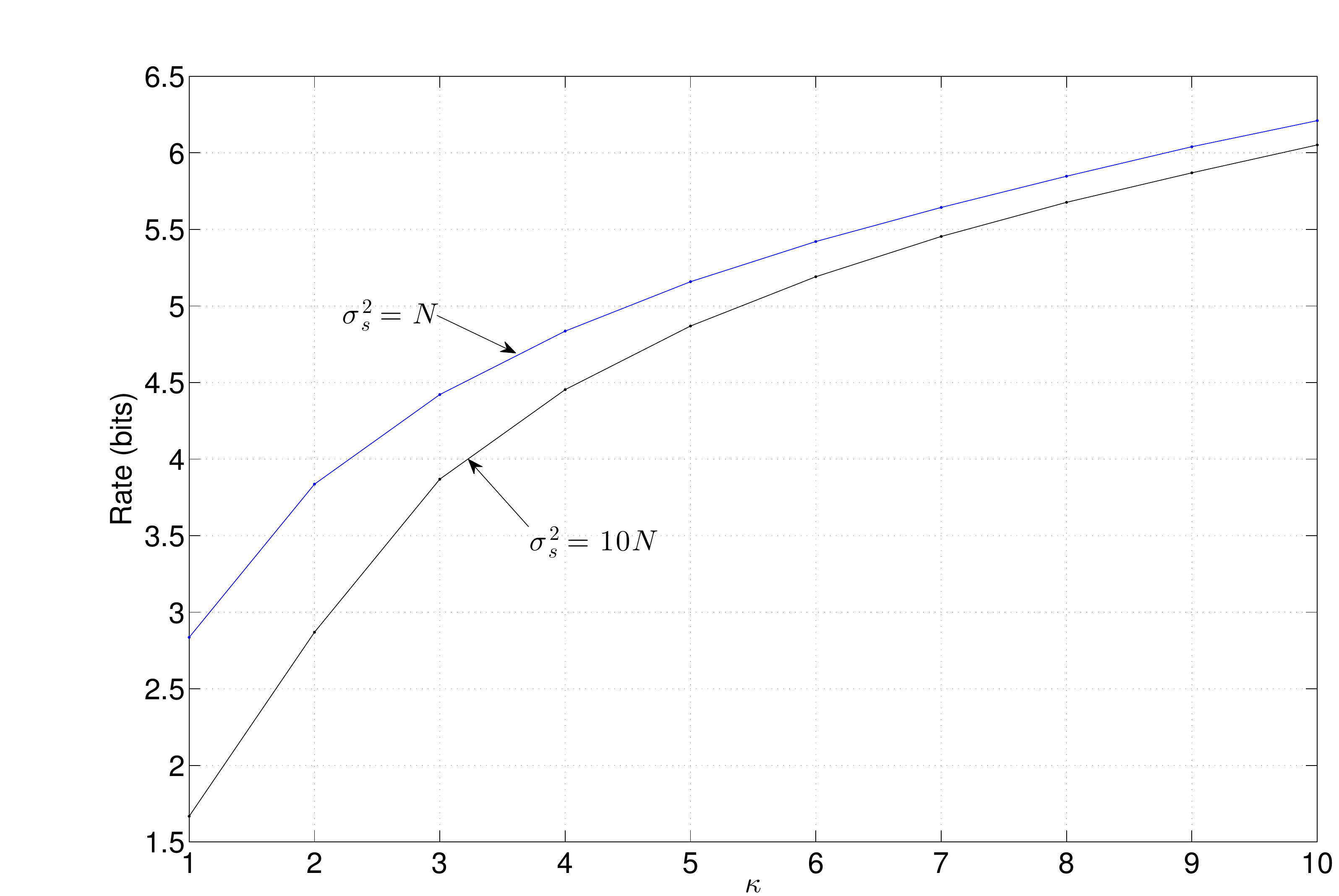}
\caption{\small{ Lower bound of Proposition \ref{prop:awgn} for $\tfrac{P}{N}=100$. The top curve is for $\sigma_s^2 = N$ and the bottom one for $\sigma_s^2 = 10N$.}}
\vspace{-5pt}
\label{fig:awgn}
\end{figure}

Figure \ref{fig:awgn} shows the capacity lower bound for $\tfrac{P}{N}=100$ and $1\leq \kappa \leq 10$. Curves are plotted for $\sigma_s^2=N$ and $\sigma_s^2=10N$.
For the second case,  the maximum in \eqref{eq:awgn_prop} is attained with an estimation period of $l=0$ for $\kappa=1$,  $l=1$ for $2 \leq  \kappa \leq 9 $, and  $l=2$ for $\kappa = 10$.

\subsection{Lower bound for General Channels}

We consider channels whose  output support is continuous valued, i.e., for $\forall (x, s) \in \mc{X} \times \mc{S}$, $P_{Y|XS}(. | x, s)$ is absolutely continuous with respect to the Lebesgue measure. This assumption is necessary because the definition of target regions for superposition coding in Section \ref{subset:awgn} implicitly assumes continuous valued outputs. The superposition idea can be extended to many discrete channels as well, but we do not pursue this here in order to keep the exposition simple.

 For a channel with average write cost $\kappa$, the two-step strategy involves: a) Designing a suitable estimator to estimate the state sequence using
$l$ writes, and b)  storing information in the remaining $(\kappa - l)$ writes using Gelfand-Pinsker coding and superposition. The Costa coding scheme for the AWGN channel is a special case of the Gelfand-Pinsker scheme for general memoryless channels with state, with the state known a priori  at the encoder.

\begin{thm}
Consider a channel  $P_{Y|XS}(. | x, s)$ that is absolutely continuous with respect to the Lebesgue measure for all $(x, s) \in \mc{X} \times \mc{S}$.  With average cost
$\kappa \geq 1$, the capacity satisfies
\be
\begin{split}
&  C(\kappa) \geq \\
& \text{conv} \left( \max_{l \in \{0, \ldots, \lfloor \kappa \rfloor -1\}} \hspace{-4pt} \max_{\mc{P}}    \left\{ I(U;Y) - I(U; \hat{S}(l))  +    \log \lfloor \kappa - l \rfloor \right\} \right)
\end{split}
\ee
 where  $\mc{P}$ is the set of joint distributions of $(S, \hat{S}(l), U, X, Y)$ of the form
\[
P_{S} \cdot P_{\hat{S}(l)|S} \cdot P_{U|\hat{S}(l)} \cdot \mathbf{1}_{ X=f(U, \hat{S}(l)) } \cdot P_{Y|X S }.
\]
\label{thm:cont_ch_lb}
\end{thm}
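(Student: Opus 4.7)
The plan is to follow the same two-phase template as in the AWGN example of Section \ref{subset:awgn}, abstracting each step so that it uses only the continuity hypothesis on $P_{Y|XS}$ and the Gelfand--Pinsker construction for the resulting single-letter channel with state $\hat{S}(l)$.

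First I would set up the estimation phase. Fix an arbitrary input symbol $x_0 \in \mc{X}$ and apply it for $l$ successive writes on each cell, recording $Y^{(1)}, \ldots, Y^{(l)}$. Together with $P_S$ and $P_{Y|XS}(\cdot | x_0, s)$, this induces a conditional law $P_{\hat{S}(l)|S}$ via any measurable estimator $\hat{S}(l) = g(Y^{(1)}, \ldots, Y^{(l)})$; the theorem's optimization over $\mc{P}$ is already broad enough to absorb whatever choice of $g$ (and of $x_0$) is best, so I can simply parametrize by the induced kernel. Conditioned on $\hat{S}(l)$, the $(l+1)$th write obeys a memoryless channel with state $\hat{S}(l)$ known causally at the encoder but with residual uncertainty $S - \hat{S}(l)$ treated as extra noise, exactly mirroring \eqref{eq:est_chmodel}.

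Next I would invoke Gelfand--Pinsker coding on this induced single-write channel. Pick any $P_{U|\hat{S}(l)}$ and deterministic $X = f(U, \hat{S}(l))$ as in the statement of the theorem, generate $2^{n R_1}$ $U$-codewords i.i.d.\ from $P_U$, and partition them into $2^{nR}$ bins. Standard covering and packing lemmas \cite{GelfPin80, CoverT} give that if $R_1 - R > I(U;\hat{S}(l)) + \epsilon$, the encoder can find a $\mbf{U}$ in the designated bin jointly typical with $\mbf{\hat{S}}(l)$ with high probability, and if $R_1 < I(U;Y) - \epsilon$, a joint-typicality decoder recovers $\mbf{U}$ from $\mbf{Y}$ with high probability. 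Eliminating $R_1$ yields the single-write rate $I(U;Y) - I(U; \hat{S}(l))$.

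The main new ingredient is the superposition layer, and this is also where I expect the only delicate point to lie. I would partition the real output space into intervals of length $\delta$ and cyclically label them with $j \in \{1, \ldots, \lfloor \kap - l \rfloor\}$, defining target region $T_j$ as the union of intervals with label $j$, exactly as in Figure \ref{fig:superpos}. The critical observation is that absolute continuity of $P_{Y|XS}(\cdot|x,s)$ with respect to Lebesgue measure implies that for any $(x,s)$, as $\delta \to 0$,
\be
\Pr\{Y \in T_j \mid X=x, S=s\} \to \frac{1}{\lfloor \kap - l \rfloor}
\ee
uniformly in $j$; a density-type argument (Lebesgue differentiation applied to the density $p_{Y|XS}(\cdot|x,s)$) secures this. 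Hence, applying stimulus $X_i = f(U_i, \hat{S}_i(l))$ repeatedly until $Y$ lands in a chosen $T_{m_i}$ yields a geometric number of attempts with mean $\lfloor \kap - l \rfloor$, regardless of the realized (unknown) pair $(S_i, \hat{S}_i(l))$. Total expected cost per cell is therefore at most $l + \lfloor \kap - l \rfloor \le \kap$, and $\log \lfloor \kap - l \rfloor$ bits are embedded in the choice of target region.

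Finally I would close the argument by observing, as in the AWGN case, that $(\mbf{U}, \mbf{Y}^{(\tau)})$ has the same joint law as $(\mbf{U}, \mbf{Y}^{(l+1)})$ because, conditional on $(\mbf{X}, \mbf{S})$ fixed across the rewrite attempts, each $Y^{(k)}$ is an independent draw from $P_{Y|XS}$; hence the Gelfand--Pinsker decoder still recovers $\mbf{U}$ and the target region index is read off directly from $Y^{(\tau_i)}$. Maximizing over the estimator index $l \in \{0, \ldots, \lfloor \kap \rfloor - 1\}$ and the joint distribution class $\mc{P}$ gives the bound at integer-like operating points, and the convex hull is attained by cost-sharing between cells, as described at the end of Section \ref{subset:awgn}. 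The main obstacle, as noted, is rigorously justifying the uniform-hit property of the target regions under only absolute continuity; this is where the continuous-output hypothesis of the theorem is used in an essential way.
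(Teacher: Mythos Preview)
Your proposal is correct and follows essentially the same two-phase (estimation then Gelfand--Pinsker plus superposition) argument as the paper's appendix proof; the structure, the covering/packing conditions $R_1 - R > I(U;\hat S(l))$ and $R_1 < I(U;Y)$, the cyclic target-region construction, and the cost-sharing convexification all match. If anything, you are more explicit than the paper in two places it glosses over: the role of absolute continuity in making the hit probabilities equal as $\delta \to 0$, and the equality in law between $(\mbf{U},\mbf{Y}^{(\tau)})$ and $(\mbf{U},\mbf{Y}^{(l+1)})$, which the paper states for the AWGN case but only implicitly invokes in the general proof.
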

\proof See Appendix.

\emph{Remarks}:
\begin{enumerate}
\item In the set of joint distributions $\mc{P}$ the state distribution $P_S$ and the channel law $P_{Y|XS}$ are fixed. The maximization over $\mc{P}$  is  therefore over the choice of estimator $\hat{S}(l)$,  auxiliary distribution $P_{U|\hat{S}(l)}$, and function $f$ to generate the channel input $X$ from $(U, \hat{S}(l))$.

\item The MMSE estimator is optimal for the AWGN average-power constrained channel, but in general the optimal estimator depends on the channel law and the input constraints.
\end{enumerate}

We conclude this section with a brief discussion of the shortcomings of the two-step coding scheme discussed in this section.
First, dedicating $l$ writes to estimating the state and then coding is not optimal in general. A scheme that simultaneously performs estimation and coding in each write attempt is likely to yield higher rates, but such a scheme may also be harder to analyze.

The information is stored in the cell array in coded in two ways: through Gelfand-Pinsker coding and  superposition coding. Each of these poses a different challenge for practical implementation. In the Costa/Gelfand-Pinsker scheme we used joint typicality or maximum-likelihood decoding,  both of which are computationally infeasible for  a large array of $n$ cells. For the AWGN case, there has been has been progress towards feasible decoders using structured codebooks such as those based on lattices \cite{erezten05,zamir02nested}.

For superposition coding, we need the reads to be very accurate as the width $\delta$ of the intervals is made small (cf.  Figure \ref{fig:superpos}). This is important during encoding (so that the controller knows when to stop writing) as well as decoding (for the decoder to know which target region the output lies in).  This problem can be handled by using an outer error-correcting code to correct errors that arise due to imperfect reads.

In the following section, we design a coding scheme that addresses all the above issues for the uniform-noise channel.

\section{Uniform Noise Channel with Hidden State} \label{sec:unif_channel}

The channel is described by \eqref{eq:unif_state_model}  with the write noise $W$ is uniformly distributed in $[-a/2,a/2]$,  and the state $S$ is uniformly distributed in $[0,B]$.
$a,B$ are assumed to be known positive constants. For ease of analysis, we will assume that $B < a$.

We present two code constructions, each of which gives a lower bound on the rewrite capacity. The first is sub-optimal but gives insight into features of good coding strategies. The second construction yields a better lower bound which is asymptotically optimal, i.e., it is arbitrarily close to the no-state capacity for sufficiently large cost constraint. The second scheme implicitly performs simultaneous state estimation and coding; further, it is computationally simple and robust to small inaccuracies in the reading process.

\subsection{Code Construction $1$} \label{sec:nostate}
 For  the uniform noise rewrite channel without state  given by $Y=X+W$, the basic coding idea is that with an average of $\kappa$ rewrites, we can shrink the effective width of the noise interval to $a/\kappa$. The average-cost  capacity was obtained in \cite{UnifCap10}.
\begin{fact}  \cite{UnifCap10}
For $ \kappa \geq \kappa_0 \triangleq {\lceil \frac{1+a}{a} \rceil}/ \left(\frac{1+a}{a}\right)$, the rewrite capacity with  average cost constraint $\kappa$ is
\[ C(\kappa) = \log \left(\frac{1+a}{a} \kappa \right).\]
\label{fact:nostate_cap}
\end{fact}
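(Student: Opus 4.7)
My plan is to establish Fact \ref{fact:nostate_cap} by matching an upper bound on $I(XT;Y)$ derived from \eqref{eq:gen_cap_formula} against the rate of an explicit tiling of the output space. Since this channel has no hidden state, the footnote in Section \ref{sec:intro} lets me restrict to non-adaptive strategies in which a single input $X$ is applied repeatedly until the output lands in the target $T$. Under such a strategy the final stored value obeys $Y \mid (X,T) \sim \text{Unif}(T \cap [X-a/2, X+a/2])$, and if $\mu = \mu(X,T)$ denotes the Lebesgue measure of that intersection, then the per-write hitting probability is $\mu/a$ so that $\expec\tau = \expec[a/\mu]$.

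For the converse I would expand $I(XT;Y) = h(Y) - h(Y|X,T)$. Because $Y \in [-a/2, 1+a/2]$, an interval of length $1+a$, we have $h(Y) \leq \log(1+a)$; the conditional uniformity above gives $h(Y|X,T) = \expec[\log \mu]$. Jensen's inequality applied to the concave logarithm yields $\expec[\log \mu] \geq -\log \expec[1/\mu]$, and the cost constraint $\expec[a/\mu] \leq \kappa$ converts this to $\expec[\log \mu] \geq \log(a/\kappa)$. Combining, $I(XT;Y) \leq \log(1+a) - \log(a/\kappa) = \log\!\left(\frac{1+a}{a}\kappa\right)$, matching the claimed formula.

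For achievability I would first treat the values of $\kappa$ at which $M \triangleq \kappa(1+a)/a$ is an integer. Since $\kappa \geq \kappa_0$ forces $M \geq \lceil (1+a)/a \rceil$, the support $[-a/2, 1+a/2]$ can be partitioned into $M$ disjoint intervals $T_1,\ldots,T_M$ of width $a/\kappa \leq a$, each inscribed in $[x_m - a/2, x_m+a/2]$ for some $x_m \in [0,1]$. A uniform prior on the messages achieves $\expec\tau = \kappa$, the disjointness of the targets makes decoding error-free, and the rate is $\log M = \log((1+a)\kappa/a)$. The hard part is extending this to all $\kappa \geq \kappa_0$: na\"ive time-sharing between adjacent integer values of $M$ only gives a piecewise-linear lower bound that sits strictly below the concave curve $\log((1+a)\kappa/a)$. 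Closing that gap seems to require a genuinely continuous-parameter family of $(X,T)$ distributions---for instance, letting $T$ be a random interval of width $a/\kappa$ whose centre is placed so that the induced marginal of $Y$ is driven as close as possible to $\text{Unif}[-a/2, 1+a/2]$ and thereby pushes $h(Y)$ up to $\log(1+a)$. Verifying that such a construction actually saturates the converse at every $\kappa \geq \kappa_0$, rather than merely approaching it, is where I expect the bulk of the work to lie.
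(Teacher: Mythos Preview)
The paper does not actually prove this statement: it is imported as a ``Fact'' from \cite{UnifCap10}, followed only by a one-paragraph description of the integer case of the achievability (partition $[-a/2,1+a/2]$ into $\frac{1+a}{a}\kappa$ equal intervals), with both the converse and the non-integer achievability simply referred back to \cite{UnifCap10}. So there is no in-paper proof to compare your proposal against.

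That said, your proposal is sound and goes well beyond what the paper records. Your converse is a clean self-contained argument that the paper does not give: the identity $h(Y\mid X,T)=\expec[\log\mu]$ together with Jensen applied to $\log(1/\mu)$ and the cost constraint $\expec[a/\mu]\leq\kappa$ is exactly the right mechanism, and it works for arbitrary measurable targets $T$, not just intervals. Your integer-case achievability coincides with the paper's sketch. You also correctly diagnose the genuine difficulty at non-integer $\frac{1+a}{a}\kappa$: straight time-sharing between adjacent integer values of $M$ produces only the chord of the concave curve and does not meet the converse. The paper's text confirms your instinct---``When $\frac{1+a}{a}\kappa$ is not an integer, the capacity is achieved by a careful generalization of the above idea, described in \cite{UnifCap10}''---so closing that gap does require the construction from the cited reference rather than anything in the present paper. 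Your proposed route (a continuous family of random-interval targets engineered to make the marginal of $Y$ exactly uniform on $[-a/2,1+a/2]$) is the right idea; the details live in \cite{UnifCap10}.
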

When $\frac{1+a}{a} \kappa$ is an integer, the capacity is achieved by simply dividing the space $[-a/2, 1+a/2]$ into equal-sized intervals of length $a/\kappa$ and choosing the target region $T$ to be one of these intervals with equal probability. The input $X$ is any point which maximizes the probability of the output falling in the region $T$. When $\frac{1+a}{a} \kappa$ is not an integer, the capacity is achieved by  a careful generalization of the above idea, described in \cite{UnifCap10}.

When there is an unknown state offset $S \in [0,B]$, the idea is to define each target region such that there is exactly one subset of width $a/\kappa$ that can be accessed with a \emph{fixed} input and an average of $\kappa$ writes, irrespective of the offset.

\begin{figure}[t]
\centering
\includegraphics[height=1.6in]{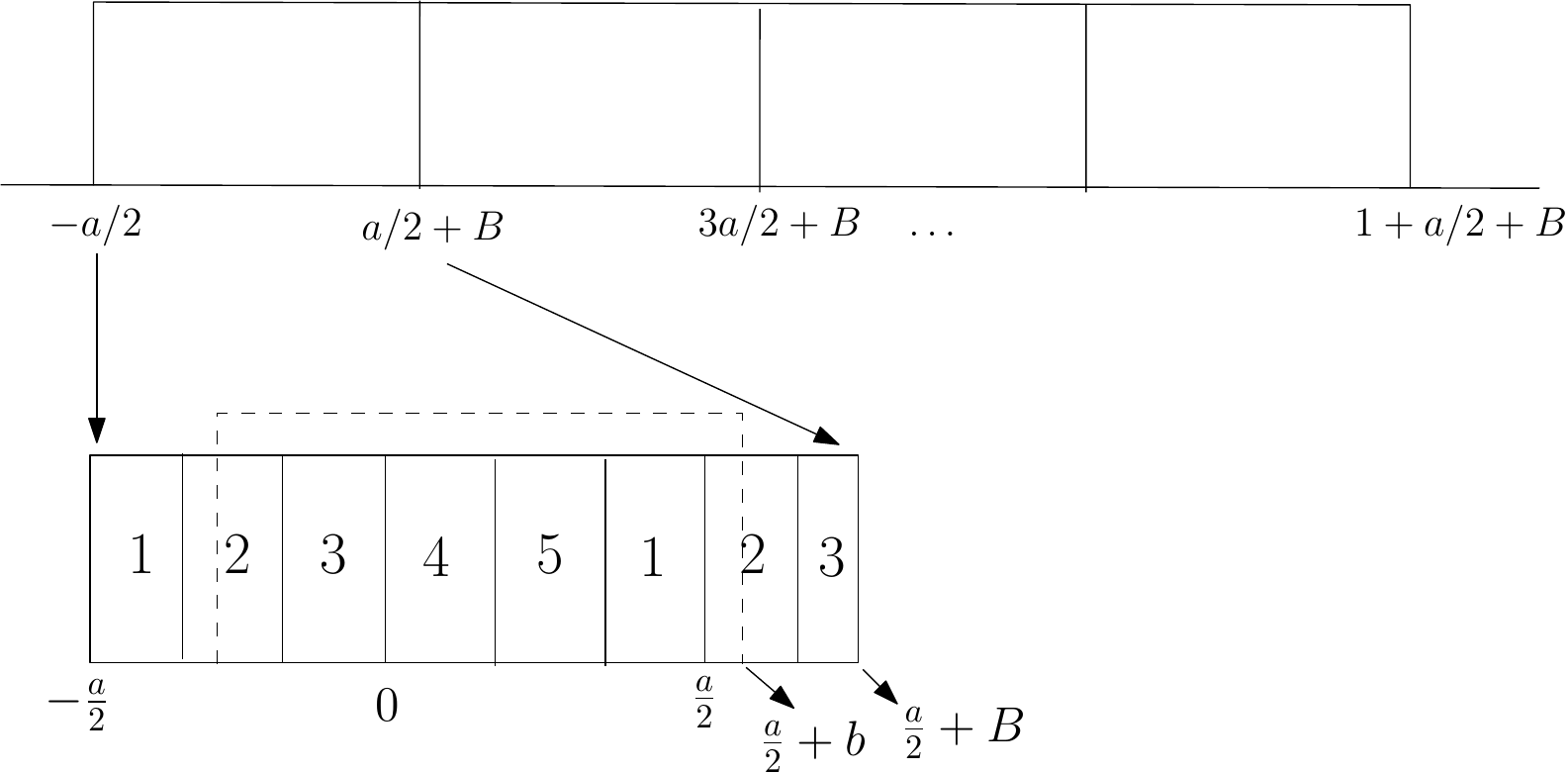}
\caption{\small{Each interval of width $a+B$ is divided into $\kappa=5$ target regions. The target regions in the interval $(-a/2,a/2+B)$ are shown.
To write on a  region in this interval, the stimulus $0$ is applied. The dashed lines indicate the part of the interval accessible with $S=b$.}}
\vspace{-10pt}
\label{fig:scheme1}
\end{figure}

\begin{figure*}[t]
\centering
\includegraphics[width=5in]{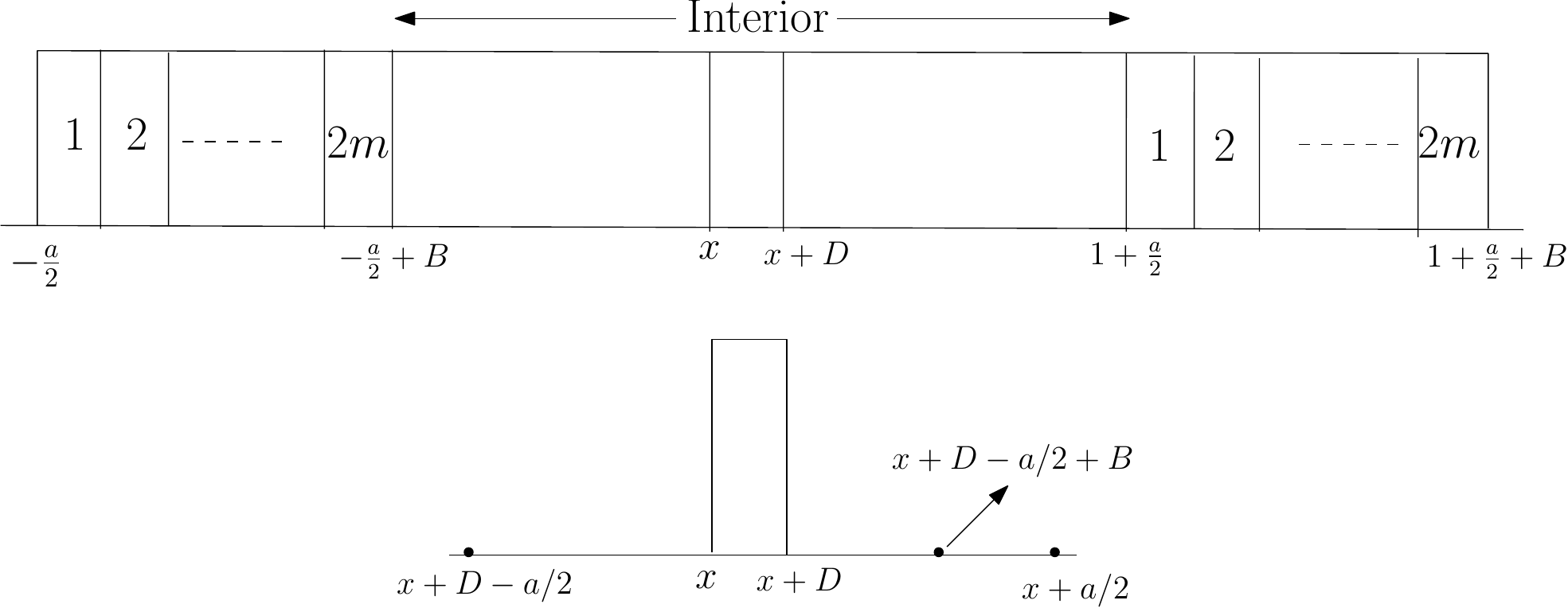}
\caption{\small{Construction $2$: The output space is divided into interior and exterior target regions. The bottom figure shows the an interior target region $[x,x+D]$. The  state shifts the input  $x+D-a/2$  to the right by an amount at most $B$. For all $b\in [0, B]$, the probability of hitting the target region in each write attempt is $D/a$ as long as $D<a-B$.}}
\vspace{-10pt}
\label{fig:scheme2}
\end{figure*}

\begin{prop}
For the uniform noise rewrite channel with hidden state and average cost $\kappa \geq 2$,
\be
C(\kappa) \geq \log \left(\kappa \left\lfloor \frac{1+a+B}{a+B} \right\rfloor \right).
\label{eq:scheme1}
\ee
\label{prop:scheme1}
\end{prop}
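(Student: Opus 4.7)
\textbf{Proof plan for Proposition~\ref{prop:scheme1}.}

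The plan is to construct an explicit single-phase scheme that partitions the output space $[-a/2,\,1+a/2+B]$ (width $1+a+B$) into $M \triangleq \lfloor(1+a+B)/(a+B)\rfloor$ disjoint ``super-intervals'' of width $a+B$, each addressable by a single fixed stimulus in $[0,1]$, and then further subdivides each super-interval into $\kappa$ target regions that are each hit with probability exactly $1/\kappa$ on each write, regardless of $S\in[0,B]$. Assuming $\kappa$ is an integer, this gives a message set of size $M\kappa$ with geometric waiting-time of mean $\kappa$, yielding rate $\log(M\kappa)$ at average cost $\kappa$ per cell. For non-integer $\kappa\ge 2$ the claim follows from a convex-hull argument as in Section~\ref{subset:awgn}.

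First I would lay out the super-intervals. Place them contiguously starting at $-a/2$: the $j$-th super-interval is $I_j \triangleq \bigl[-a/2+(j-1)(a+B),\,-a/2+j(a+B)\bigr)$ for $j=1,\ldots,M$. The definition of $M$ ensures $I_1,\ldots,I_M$ are pairwise disjoint subsets of $[-a/2,\,1+a/2+B]$. For each $j$, set the stimulus $x_j \triangleq (j-1)(a+B)$; one checks that $x_j \in [0,1]$ (using $M-1 \le 1/(a+B)$), and that for every $S\in[0,B]$ and $W\in[-a/2,a/2]$, the output $x_j+W+S$ lies in $I_j$. Thus the super-intervals are mutually disjoint output regions, one per valid stimulus, so any decoder that observes $Y$ can unambiguously recover the index $j$.

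Next I would define $\kappa$ target regions inside each super-interval using the cyclic-strip trick from Fig.~\ref{fig:superpos}. Choose a small width $\delta = a/(\kappa N)$ for a large integer $N$, slice $I_j$ into consecutive strips of width $\delta$, and label them $1,2,\ldots,\kappa,1,2,\ldots,\kappa,\ldots$ cyclically; target region $T_{j,\ell}$ is the union of strips labeled $\ell$. The key property is geometric: for any $S=s\in[0,B]$, applying $x_j$ produces $Y$ uniformly distributed on the window $[x_j-a/2+s,\,x_j+a/2+s]\subset I_j$ of width $a$. Because the window width $a=\kappa N\delta$ is exactly a multiple of the cyclic period $\kappa\delta$, the window contains exactly $N$ strips of each label, so $\Pr[Y\in T_{j,\ell}\mid X=x_j,S=s] = 1/\kappa$ independent of $s$. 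Consequently, the number of writes needed to land in a preselected $T_{j,\ell}$ is geometric with mean $\kappa$, so the average cost equals $\kappa$.

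The encoder maps a message in $\{1,\ldots,M\kappa\}$ to a pair $(j,\ell)$, applies $x_j$ until the output lies in $T_{j,\ell}$, and the decoder reads $Y$, identifies $j$ from the super-interval and $\ell$ from the strip label. Decoding is exact, so the rate $\log(M\kappa)$ is achievable at cost $\kappa$ for integer $\kappa\ge 2$. For general $\kappa\ge 2$, apply cost-sharing between cells as in Section~\ref{subset:awgn} to obtain the convex hull, which in this case is already the expression $\log(\kappa M)$ since the formula is linear in $\kappa$ on a log scale up to the floor. The main subtlety to get right is the ``$1/\kappa$ regardless of $s$'' property, which is why we take $\delta\to 0$ with $\delta = a/(\kappa N)$ and rely crucially on $B<a$ so that the shifted window never leaves $I_j$; this is the one step where I would be most careful with edge effects, as any strip partially overlapping the boundary of $I_j$ would break the exact $1/\kappa$ probability.
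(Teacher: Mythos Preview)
Your construction is correct and follows essentially the same route as the paper: both partition the output range into $M=\lfloor(1+a+B)/(a+B)\rfloor$ super-intervals of width $a+B$, each reached by a fixed stimulus $x_j=(j-1)(a+B)$, and then subdivide each super-interval into $\kappa$ target regions via a cyclic labeling whose period divides $a$. The one difference is granularity: you take strip width $\delta=a/(\kappa N)$ and worry about sending $N\to\infty$ to kill edge effects, whereas the paper simply takes $N=1$ (so each target region is the union of at most two intervals of width $a/\kappa$, as in Figure~\ref{fig:scheme1}), and this already yields hit probability exactly $1/\kappa$ for every $s\in[0,B]$ because the accessible window $[x_j-a/2+s,\,x_j+a/2+s]$ has width precisely $a$, one full period of the labeling --- so your limiting argument is unnecessary, and in fact your scheme with $N=1$ \emph{is} the paper's scheme.
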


\begin{IEEEproof}
\emph{The target regions}:
Refer Figure \ref{fig:scheme1}. The output space $[-a/2, 1+a/2+B]$ is first divided into intervals of length $(a+B)$ each. There are $N =\lfloor \tfrac{1+a+B}{a+B} \rfloor$ such intervals, denoted $\mc{Z}_i, 0\leq i \leq N-1$.  If $\tfrac{1+a+B}{a+B}$ is not an integer, the remaining output space $(N(a+B), 1+\tfrac{a}{2}+B]$  is discarded.

For clarity, consider the case where $\kappa$ is an integer. Divide each interval $\mc{Z}_i$ into $\kappa$ target regions.
The target regions for the first interval are defined as shown in Figure \ref{fig:scheme1}.  Target region $1$ is the interval
$[-\tfrac{a}{2}, -\tfrac{a}{2} + \tfrac{a}{\kappa}] \cup [\tfrac{a}{2}, \tfrac{a}{2} + \tfrac{a}{\kappa}]$;  region $2$ is the interval $[-\tfrac{a}{2}+\tfrac{a}{\kappa}, -\tfrac{a}{2} + \tfrac{2a}{\kappa}] \cup [\tfrac{a}{2}+\tfrac{a}{\kappa}, \tfrac{a}{2} + \tfrac{2a}{\kappa}]$, and so on.  Similarly, $\kappa$ target regions are defined for each of the $N$ intervals.

\emph{Encoding}: To reach target region $t$ in interval $\mc{Z}_i$ for $t \in \{1, \ldots , \kappa \}, i \in \{0, \ldots, N -1\}$, apply input
$X=(a+B)i$ until the output falls within region $t$ in interval $i$.  With this input, the accessible part of the target region has width exactly $a/ \kappa$ for any value of $S \in [0,B]$. This is illustrated in the bottom part of Figure \ref{fig:scheme1}. Regardless of the offset, the probability of the output falling within the target region on any write attempt is $\frac{a/\kappa}{a}$. The average number of rewrites is therefore $\kappa$.

The total number of target regions is $N\kappa$ and by assigning them equal probability, the rate is
\be
\begin{split}
I(XT;Y) = I(T;Y) =H(T) = \log (N\kappa)
\end{split}
\ee
where the first equality holds because the input $X$ is a function of the target region $T$, and the second equality is due to $T$ being uniquely determined by $Y$.

The general case where $\kappa$ is not an integer can be handled by an extension of the above scheme using the techniques in \cite{UnifCap10}.
\end{IEEEproof}

\emph{Remark}:  When $\tfrac{1+a+B}{a+B}$ is an integer, \eqref{eq:scheme1} can be written as
\be
C(\kappa) \geq \log \left(\kappa \frac{1+a}{a}\right) - \log \left( \frac{1+  B /a}{1+ B/(1+a)}  \right)
\label{eq:two_term_scheme1}
 \ee
The first term  above is the capacity when $S=0$, or when $S$ is precisely known at the encoder. The second term is the loss incurred by the coding scheme due the state being unknown.

In the above construction, we designed the target regions so that each one can be accessed with equal probability regardless of the value of $S$.  We did not use the rewrites to do any state estimation. The sub-optimality of this strategy is seen by observing that even when the number of rewrites $\kappa$ is very large, the lower bound of Proposition \ref{prop:scheme1} is strictly less than the  capacity with $S=0$,  given by the first term in \eqref{eq:two_term_scheme1}.
 The next construction  remedies this deficiency.
\subsection{Code Construction $2$}
 As shown in Figure \ref{fig:scheme2}, divide the output space $[-a/2, 1+a/2+B]$ into two regions: the interval $[-a/2+B, 1+a/2]$ called the `interior',
and the remaining space $[-a/2, -a/2+B] \cup [1+a/2, 1+a/2+B]$ called the `exterior'.

\emph{Interior target regions}: Divide the interior into intervals (target regions) of width $D$. The key observation is that  if $D < a-B$, regardless of the value of $S$,  each interior target region is fully accessible  with an average of $a/D$ write attempts with a {fixed} input.
As illustrated in the bottom part of Figure \ref{fig:scheme2}, to access the interior target region $[x, x+D]$, apply the stimulus $(x+D-a/2)^+$. To fully access the region with offset $b$, we need
\[ (x+D-a/2)^+ + b - a/2 < x  \]
which holds for all $b \in [0,B]$ as long as $D<a-B$.

\emph{Exterior target regions}: As shown in Figure \ref{fig:scheme2},  define $2m$ exterior target regions for an integer $m\geq1$. For $i=1,\ldots,2m$,
the $ith$ exterior target region, labeled $E_i$, is
\ben
\begin{split}
&\left[-\frac{a}{2}+(i-1)\frac{B}{2m}, \ -\frac{a}{2}+ i\frac{B}{2m}\right]\\
 & \quad \bigcup \left[1+ \frac{a}{2} +(i-1)\frac{B}{2m}, \ 1+\frac{a}{2}+ i\frac{B}{2m}\right]. \end{split}
\een

With this construction, we present a coding scheme that achieves the following lower bound on the rewrite capacity.
\begin{thm}
For $\kappa \geq 2$
\[ C(\kappa) \geq \max_{p,D,m} \ h(p) + p \log  \left\lfloor \frac{1+a-B}{D} \right\rfloor  + (1-p)\log 2m \]
where the maximum is over $p \in [0,1]$, $D \in (0,a-B)$ and integers $m \geq 1$ that satisfy
\be
\begin{split}
& (1-p) \frac{a}{B}\left[2m+1 + \frac{1}{m}\sum_{i=1}^{m} \left( \ln\frac{i-\delta_i}{(1-\delta_i)^2}  + \frac{\delta_i}{1-\delta_i} \ln \delta_i \right) \right] \\
& \quad + p \frac{a}{D} \leq \kappa
\end{split}  \label{eq:pd_ext_cond_imp}
\ee
where  the optimal $\delta_i \in [0,1)$ for $i=1,2,\ldots,m$  is determined by the following equation:
\be
2(1-\delta_i)^2 + 3(i-1)(1-\delta_i) + (i-\delta_i) \ln \delta_i = 0.
\label{eq:opt_deli_thm}
\ee
The optimal $\delta_i$ for a few values of $i$ are listed in Table \ref{tab:deli}.
\label{thm:ext_scheme}
\end{thm}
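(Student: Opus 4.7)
The plan is to build a randomized code where each cell independently chooses the interior regime with probability $p$ and the exterior regime with probability $1-p$; conditioned on the regime, the target region is chosen uniformly over the available options, giving $\lfloor(1+a-B)/D\rfloor$ choices in the interior and $2m$ in the exterior. Because the target regions partition the usable output support $[-a/2,1+a/2+B]$, the target $T$ is a deterministic function of $Y$, and the input $X$ is a deterministic function of $T$, so the per-letter mutual information reduces to $I(XT;Y)=H(T)=h(p)+p\log\lfloor(1+a-B)/D\rfloor+(1-p)\log(2m)$. This matches the claimed rate, and what remains is to bound the average write cost by $\kappa$.

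The interior contribution is immediate: by the construction of Section~III-B, for any $S\in[0,B]$ and any width-$D$ target $[x,x+D]$, the stimulus $(x+D-a/2)^+$ lands $Y$ inside the target with probability exactly $D/a$ (this is what $D<a-B$ buys us), so the number of writes is geometric with mean $a/D$ and contributes $p\cdot a/D$ to the total cost. The exterior regime is the hard part: for any fixed input only a sub-range of offsets permits $Y$ to fall into a given $E_j$, so a state-oblivious strategy is impossible, and the encoder must use the sequence of failed attempts to progressively narrow down the posterior on $S$.

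The protocol I would use proceeds in at most $m$ adaptive phases. To write target $E_j$, the encoder first applies an input that renders $E_j$ accessible precisely when $S$ lies in an initial sub-interval of $[0,B]$ of length $B/m$, plays geometric attempts until either success occurs or the posterior on the current sub-interval has fallen to the threshold $\delta_1$, and then re-aims at the adjacent sub-interval with threshold $\delta_2$, and so on up to $\delta_m$. Since $W$ and $S$ are both uniform, Bayes updating keeps the posterior on $S$ piecewise uniform on the sub-interval consistent with the failure history, so the phase-entry probabilities and the per-phase conditional geometric means can be written down in closed form. Integrating the total number of attempts against the uniform prior on $S$ and collecting terms yields the bracketed expression in \eqref{eq:pd_ext_cond_imp}, multiplied by the prefactor $a/B$; adding $p\cdot a/D$ and enforcing the bound $\leq\kappa$ gives \eqref{eq:pd_ext_cond_imp}.

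The final step is to optimize the thresholds. The exterior cost depends on $\delta_1,\ldots,\delta_m$ only through the separable sum $\sum_{i=1}^m\bigl(\ln\tfrac{i-\delta_i}{(1-\delta_i)^2}+\tfrac{\delta_i}{1-\delta_i}\ln\delta_i\bigr)$; differentiating the $i$th summand with respect to $\delta_i$, clearing denominators, and rearranging produces the stationarity equation \eqref{eq:opt_deli_thm}, whose unique root in $[0,1)$ is the optimal threshold (numerical values are in Table~\ref{tab:deli}). The main obstacle is the exterior cost derivation itself: inductively verifying that the posterior on $S$ remains uniform after each block of failures, bookkeeping which realizations of $S$ terminate in which phase, and carrying out the integration against $P_S$ to recover the closed-form cost. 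Once that expression is in hand, the theorem follows by maximizing $H(T)$ over $(p,D,m)$ subject to \eqref{eq:pd_ext_cond_imp}.
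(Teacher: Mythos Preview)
Your rate computation and the interior analysis are correct, but your exterior protocol is not the one that produces the cost formula \eqref{eq:pd_ext_cond_imp}, and as described it would not work.

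The key structural point you have missed is that each exterior target $E_i$ is the \emph{union of two bins}, one in $[-a/2,-a/2+B]$ and one in $[1+a/2,1+a/2+B]$.  This is essential: a single exterior bin is simply inaccessible for a whole range of offsets (e.g.\ the leftmost bin cannot be reached by any input once $S>B/(2m)$), so no amount of adaptive re-aiming at ``adjacent sub-intervals'' can save a one-bin target.  The two-bin design guarantees that for every $S\in[0,B]$ at least one of the bins of $E_i$ is reachable, the right bin with stimulus $1$ and the left bin with stimulus $0$.

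Consequently the protocol has at most \emph{two} phases, not $m$: for $1\le i\le m$ apply stimulus $1$ repeatedly; if any single output falls below the level $1-\tfrac{a}{2}+\tfrac{B(i-\delta_i)}{2m}$ (which certifies $S<\tfrac{B(i-\delta_i)}{2m}$), switch once to stimulus $0$ and target the left bin.  No Bayesian posterior tracking is used; the switch is triggered by one observation crossing a fixed threshold.  The parameter $\delta_i$ is \emph{one threshold per region}, not one per phase: the index $i$ in $\delta_i$ labels which exterior region $E_i$ is being written, and the factor $\tfrac{1}{m}\sum_{i=1}^m(\cdots)$ in \eqref{eq:pd_ext_cond_imp} is the average of the per-region cost over the uniform choice among $E_1,\ldots,E_m$ (regions $E_{m+1},\ldots,E_{2m}$ are handled symmetrically).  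The closed form arises by splitting the integral over $b\in[0,B]$ into four pieces according to whether the right bin is fully accessible, partially accessible without switching, partially accessible with possible switching, or inaccessible; the $\ln\tfrac{i-\delta_i}{(1-\delta_i)^2}$ term comes directly from that four-way case analysis, not from an $m$-step recursion.  Your $m$-phase Bayesian scheme would yield a different cost expression and, as stated, is not well-defined because there is no ``adjacent sub-interval'' input that makes a fixed exterior bin accessible for a different range of $S$.
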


\emph{Remark}:
The $\delta_i$ in the above theorem can chosen to be arbitrary values in [0,1). Picking $\delta_i$ that satisfy \eqref{eq:opt_deli_thm} minimizes the number of rewrites, given by the left side of \eqref{eq:pd_ext_cond_imp}. For example, \eqref{eq:pd_ext_cond_imp} can be replaced by a  simpler condition obtained by setting $\delta_i=0$ for all $i$:
\be p \frac{a}{D} + (1-p) \frac{a}{B}\left(2m +1+ \frac{\ln m!}{m}\right) \leq \kappa. \label{eq:pd_ext_cond} \ee

The proof of the theorem is given in the next section.
Figure \ref{fig:plot} shows the lower bound of Theorem \ref{thm:ext_scheme} with $a=1/3$ and $B=a/2$ for various values of $\kappa$.

We now show that the above lower bound converges to the no-state capacity as the rewrite constraint $\kappa \to \infty$.
 \begin{corr}
 The rate $R(\kappa)$ achieved by Theorem \ref{thm:ext_scheme} satisfies
 \[ \left| \log \left(\frac{1+a}{a} \kappa \right) - R(\kappa) \right|  \to 0 \text{ as } \kappa \to \infty\]
\end{corr}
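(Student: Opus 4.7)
The plan is to exhibit explicit $\kappa$-dependent choices of $(p,D,m)$ in Theorem~\ref{thm:ext_scheme} achieving rate $\log((1+a)\kappa/a)-o(1)$. Combined with the trivial upper bound $R(\kappa)\le C(\kappa)\le\log((1+a)\kappa/a)$ (the latter because giving $S$ to both encoder and decoder reduces the problem to the stateless channel of Fact~\ref{fact:nostate_cap}, whose capacity is $\log((1+a)\kappa/a)$), this will prove the claim.

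Pure interior coding ($p=1$) only yields $\log((1+a-B)\kappa/a)$, short of the target by the constant $\log((1+a)/(1+a-B))$, so interior and exterior target regions must be mixed in balanced proportions. The correct balance turns out to be the $\kappa$-independent choice
\[ p^\star=\frac{1+a-B}{1+a},\qquad D=\frac{a}{\kappa},\qquad m=\left\lfloor\frac{B\kappa}{2a}\right\rfloor. \]
Both per-cell costs $a/D$ and $(a/B)\cdot 2m$ equal $\kappa$ to leading order, so the convex combination in~\eqref{eq:pd_ext_cond} is $\kappa+O(\log\kappa)$; the excess (coming from the $+1$ and the Stirling term $(\ln m!)/m$) is absorbed by shrinking $D$ and $m$ by a multiplicative factor $1+O((\log\kappa)/\kappa)$, which perturbs the rate by only $o(1)$ and keeps $D<a-B$ for $\kappa$ large.

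Substituting into Theorem~\ref{thm:ext_scheme} gives $R(\kappa)=\log\kappa+C^\star+o(1)$ with
\[ C^\star=p^\star\log\tfrac{1+a-B}{a}+(1-p^\star)\log\tfrac{B}{a}+h(p^\star). \]
The central algebraic step is to verify $C^\star=\log((1+a)/a)$: expanding $h(p^\star)=\log(1+a)-p^\star\log(1+a-B)-(1-p^\star)\log B$ and collecting terms, the $\log(1+a-B)$ and $\log B$ pieces cancel exactly against the first two summands, leaving $C^\star=\log(1+a)-\log a$. Equivalently, $p^\star$ is the unique stationary point of $C^\star$ in $p$, characterized by $\log((1+a-B)/B)+\log((1-p)/p)=0$, and the clean form at this stationary point is precisely the target constant.

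The conceptual content is thus quite small; the only real obstacle is bookkeeping. The two floor operations, the integer constraint $m\ge 1$, and the lower-order exterior cost terms each perturb a logarithm's argument by a factor $1+o(1)$, and therefore contribute only $o(1)$ additively to the rate, matching the $o(1)$ asserted in the corollary.
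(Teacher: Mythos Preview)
Your proposal is correct and follows essentially the same route as the paper: identical choices $p=\tfrac{1+a-B}{1+a}$, $D=a/\kappa$, $m\approx B\kappa/(2a)$, the same observation that the cost overshoot is $O((\log\kappa)/\kappa)$ and can be absorbed multiplicatively, and the same conclusion. You are slightly more explicit than the paper in two places---you state the upper bound $R(\kappa)\le C(\kappa)\le\log((1+a)\kappa/a)$ (which the paper leaves implicit from context) and you spell out the entropy cancellation showing $C^\star=\log((1+a)/a)$---but the argument is the same.
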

\begin{proof}
Choose $D=a/\kappa$ and $m=\frac{B}{2D}=\kappa \frac{B}{2a}$. Note that for all $B<a$, $D=a/\kappa < (a-B)$ for sufficiently large $\kappa$. With this choice and setting $\delta_i=0$ for all $i$,  the average number of rewrites given by the left-side of  \eqref{eq:pd_ext_cond} becomes
\[ p \kappa + (1-p) \frac{\kappa}{2} \left(2+ \frac{1}{m} + \frac{\ln m!}{m^2} \right) = \kappa(1 + \epsilon_{\kappa}), \]
where $\epsilon_{\kappa} =  O(\frac{\log \kappa}{\kappa})$ goes to $0$ as $\kappa \to \infty$.
Then with $p=\frac{1+a-B}{1+a}$, we see that
\[ R(\kappa(1+\epsilon_{\kappa})) = \log \left( \frac{1+a}{a} \kappa \right).\]
Therefore $R(\kappa) = \log \left( \frac{1+a}{a} \frac{\kappa}{1+\epsilon_{\kappa}} \right)$.
\end{proof}

\emph{Remarks} :
\begin{enumerate}
\item The coding scheme for the uniform noise channel (described in Section \ref{sec:proof}) stores information cell-by-cell, and therefore has low computational complexity. In contrast, each codeword in the Gelfand-Pinsker scheme of Section \ref{sec:gen_lb} is defined over a large array of $n$ cells, which makes the encoding and decoding computationally hard.

\item All results in this section generalize to the case where the hidden state uniformly distributed over a different support set of width $B$ that is different from $[0,B]$.

\item The coding scheme of construction 1 can directly be used when $B > a$. Construction $2$ needs modification -- the interior target region will not completely lie within the noise support when $B > a$. This can be addressed by shifting the input stimulus by an appropriate amount if it is determined that the offset $b>a$. This is  similar in principle to the switching strategy used for the exterior regions.
\end{enumerate}

\section{Proof of Theorem \ref{thm:ext_scheme}} \label{sec:proof}
To highlight the main ideas, we start with a simplified coding scheme for the case of $m=1$, i.e., two exterior target regions. 
\subsection{Two Exterior Target Regions}
\emph{Coding Scheme}: Fix  $p \in [0,1]$. For each cell, an interior target region is picked with probability $p$ and an exterior region is picked with probability $(1-p)$. All interior target regions are equally likely, as are the  exterior regions. Formally, each interior region has probability $p \tfrac{D}{1+a-B}$ and the two exterior regions each have probability $(1-p)/2$.
Refer Figure \ref{fig:two_regions}.
To write on  interior region $[x, x+D]$, repeatedly apply stimulus $(x+D-a/2)^+$ until the output falls within the region.

To write on exterior region $E_1$: Apply stimulus $1$ until either the output falls in $(1+a/2,1+a/2+B/2)$, or it is less than $1-a/2+B/2$. If the former occurs, stop. Otherwise, apply stimulus $0$ until the output falls in $(-a/2,-a/2+B/2)$. The intuition is that the right bin of $E_1$ is fully accessible with stimulus $1$  if the offset  lies in the interval $[B/2,1]$. We switch to the left bin of $E_1$ if we detect that the offset lies outside this interval.

To write on exterior region $E_2$: Apply stimulus $0$ until either the output falls in $(-a/2+B/2,-a/2+B)$ or it is greater than $a/2+B/2$. If the former occurs, stop. Otherwise, switch to applying stimulus $1$ until the output falls in  $(1+a/2+B/2,1+a/2+B)$. If the offset lies in the interval $[0,B/2]$, the left bin of $E_2$ is fully accessible with stimulus $0$. We switch to the right bin of $E_2$ if we detect that the offset lies outside this interval.

\emph{Analysis}: Since we have two exterior target regions with probability $(1-p)/2$, and $\lfloor \tfrac{1+a-B}{D} \rfloor$ interior regions each with probability
$p/ \lfloor \frac{1+a-B}{D} \rfloor$, the rate of information stored in each cell is calculated to be
\be H(T) = h(p) + p \log \lfloor \frac{1+a-B}{D} \rfloor + (1-p)\log 2. \ee
\begin{table}[t]
\caption{\small{Optimal value of $\delta_i$ for $1 \leq i \leq m$}}
\begin{center}
\vspace{-5pt}
\begin{tabular}{|c|c|c|c|c|c|c|}
\hline
$i$ & $1$ & $2$ & $3$ & $4$ & $5$ & $6$\\
\hline
$\delta_i$ & $0.2032$ & $0.1038$ & $0.0858$ & $0.0782$ & $0.0740$ & $0.0713$\\
\hline
\end{tabular}
\end{center}\label{tab:deli}
\vspace{-14pt}
\end{table}
Next we compute the average number of writes and set it equal to $\kappa$.
\be
\begin{split}
\kappa & = p \ \expec[\# \text{ writes} \mid \text{interior}] + (1-p) \expec[\# \text{ writes} \mid \text{exterior}] \\
& = p \frac{a}{D} + (1-p) \expec[\# \text{writes} \mid \text{exterior}].
\end{split}
\label{eq:rew_total}
\ee
By symmetry,
\be
\begin{split}
&\expec[\# \text{ writes} \mid \text{exterior}] = \expec[\# \text{ writes} \mid \text{ext. region } E_1]\\
&=   \int_{0}^{B} \expec[\# \text{ writes} \mid  E_1, S=b] \frac{1}{B} db\\
& = \int_{0}^{B/2} \expec[\# \text{ writes} \mid  E_1, S=b] \frac{1}{B} db  +  \int_{\frac{B}{2}}^{B} \frac{a}{B/2} \frac{1}{B} db
\end{split}
\label{eq:rew_ext_basic}
\ee
since the right bin of $E_1$ is fully accessible with stimulus $1$ when $S \in [B/2,B]$. We now show that for all $b \in [0,B/2)$,
\be
\expec[\# \text{ writes} \mid  E_1, S=b] = {4a}/{B}.
\label{eq:switch_avg}
\ee
\begin{figure}[t]
\includegraphics[width=3.5in, height=2.5in]{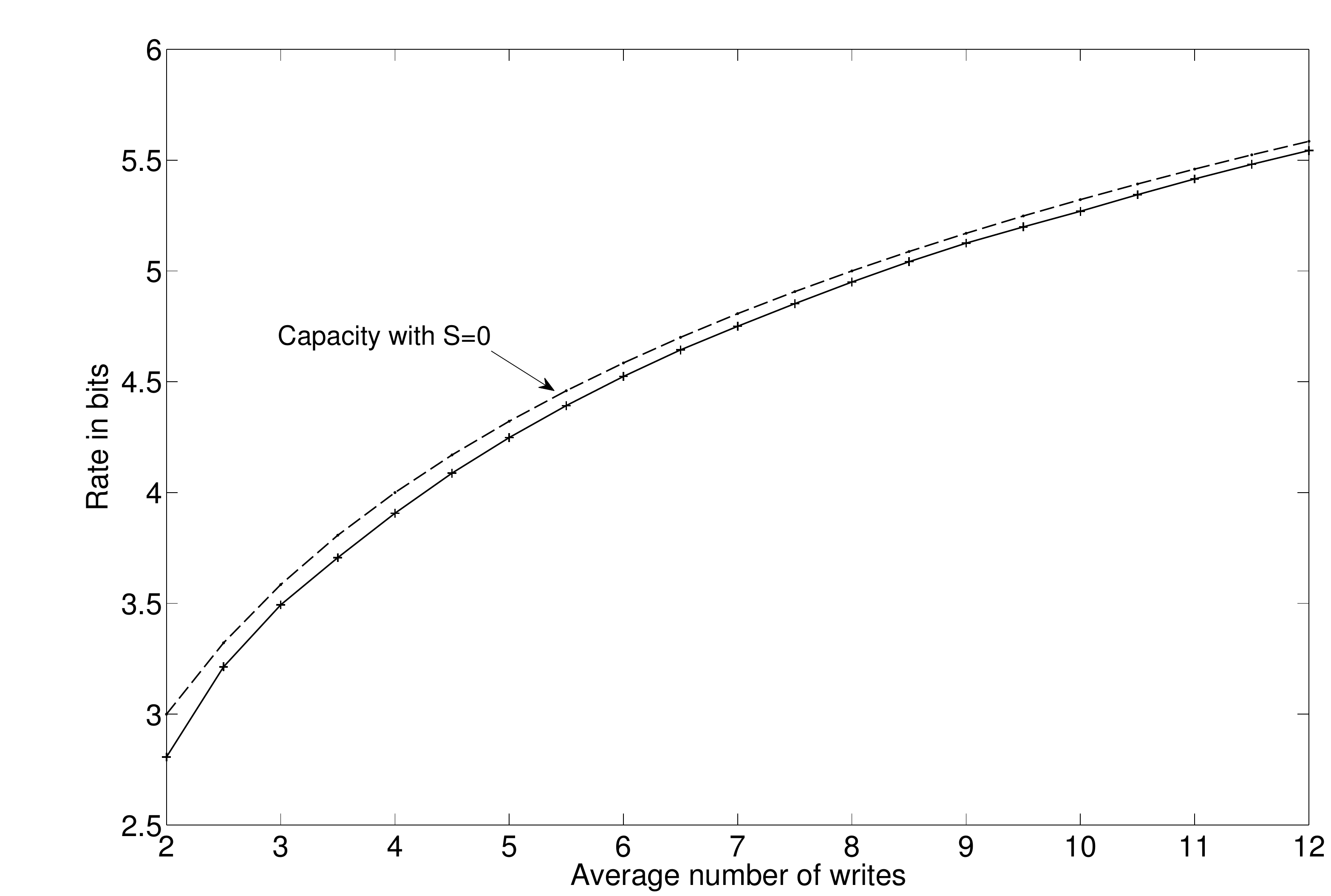}
\caption{\small{Achievable rate of Theorem \ref{thm:ext_scheme} with noise width $a=1/3$ and $S$ uniformly distributed in $[0,B]$ with  $B=a/2$.}}
\vspace{-13pt}
\label{fig:plot}
\end{figure}
Recall that for $E_1$, we first apply stimulus $1$ until either the output falls in either $(1+a/2,1+a/2+B/2)$ or it is less than $1-a/2+B/2$. For $b \in [0,B/2)$,  the probability of  the first event occurring in any write attempt is $b/a$, and that of the second event occurring is $(B/2-b)/a$. Hence the probability of the first step being completed in each write attempt, is
\be
\frac{b}{a} + \frac{B/2-b}{a} = \frac{B}{2a}.
\ee
Therefore the average number of writes for the first step of $E_1$ is $2a/B$ for all $b \in [0,B/2)$. The probability of the first step ending by obtaining an output less than $1-a/2+B/2$ is
\[ \frac{(B/2 -b)/{a}} {{b}/{a} + (B/2-b)/{a}} = \frac{B/2 -b}{B/2}.  \]
When this event occurs, the average number of additional writes required (by applying stimulus $0$) is $a/(B/2-b)$. Thus for $b \in [0,B/2)$, the average number of writes for writing on $E_1$ is
\be
\expec[\# \text{ writes} \mid  E_1, S=b] =\frac{2a}{B} + \frac{B/2 -b}{B/2} \cdot \frac{a}{B/2-b} = \frac{4a}{B}.
\ee
Substituting in \eqref{eq:rew_ext_basic}, we obtain
\be
\expec[\# \text{ writes} \mid \text{exterior}] = \frac{4a}{B}\frac{1}{2} + \frac{2a}{B}\frac{1}{2} = \frac{3a}{B}.
\ee
Using this in \eqref{eq:rew_total}, we get
\be
\kappa = p \frac{a}{D} + (1-p) \frac{3a}{B}
\ee
which corresponds to \eqref{eq:pd_ext_cond_imp} with $m=1$ and $\delta_1=0$. We now modify the scheme slightly to reduce the average number of rewrites to the level stated in Theorem \ref{thm:ext_scheme}:
\be
\kappa = p \frac{a}{D} + (1-p) \left(\frac{3a}{B} + \ln\frac{1}{1-\delta_1} +  \frac{\delta_1}{1-\delta_1} \ln \delta_1 \right)
\ee
with $\delta_1$ given by Table \ref{tab:deli}.

\emph{Optimizing the Switching Strategy}: To write on exterior region $1$ in Figure \ref{fig:two_regions}, the above coding scheme switches from stimulus $1$ to stimulus $0$ when  an output less than $1 - \tfrac{a}{2} + \tfrac{B}{2}$ is obtained. Such an output indicates that the value of the hidden state $S$ is less than $\tfrac{B}{2}$ which implies that the right bin of $E_1$  --  the region $[1+ \tfrac{a}{2} , 1+ \tfrac{a}{2} + \tfrac{B}{2}]$ --  is not \emph{fully} accessible with stimulus $1$; so the schemes switches to targeting the left bin of $E_1$ with stimulus  $0$. This switching strategy is not optimal. Consider a more general switching strategy of the following form: switch from stimulus $1$ to $0$ once you obtain an output less than $1 -\tfrac{a}{2}+ \tfrac{B}{2}(1-\delta_1)$ for some
$\delta_1 \in [0,1)$. This corresponds to switching once you detect that $S$ is less than $\tfrac{B (1-\delta_1)}{2}$. We now determine the optimum value of $\delta_1$ that minimizes the average number of rewrites. By symmetry, the switching strategy for exterior regions $E_2$ is to switch from stimulus $0$ to $1$ when you get an output greater than
$\frac{a}{2}+ \frac{B}{2} (1+\delta_1)$.

\begin{figure}[t]
\centering
\includegraphics[width=3in]{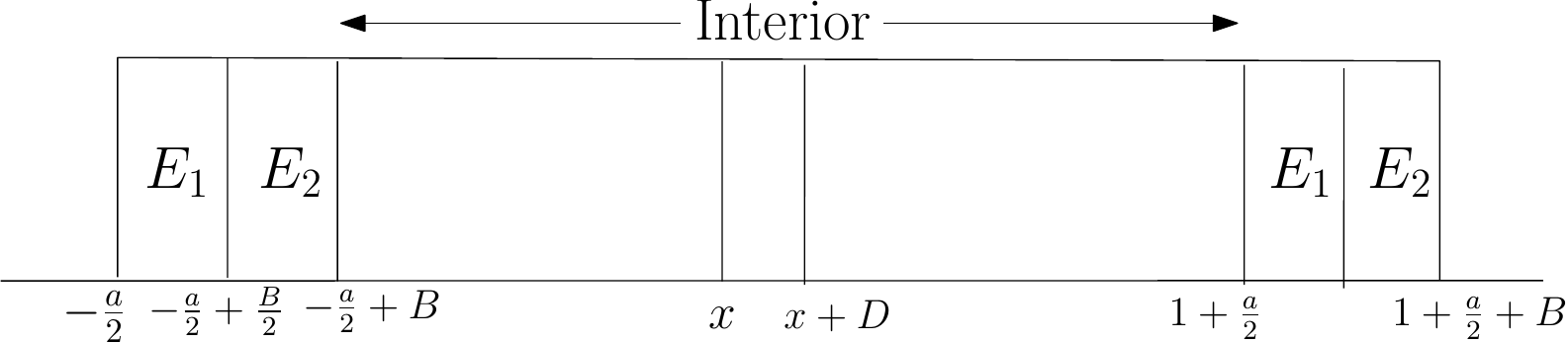}
\caption{\small{Construction $2$ with two exterior target regions.}}
\label{fig:two_regions}
\vspace{-11pt}
\end{figure}

The average number of rewrites for region $E_1$ is
\be
\begin{split}
\expec[\# \text{ writes} \mid \text{region } E_1]
= \int_{0}^{B} \expec[\# \text{ writes} \mid  E_1, S=b] \frac{1}{B} db
\end{split}
\label{eq:rew_ext_imp}
\ee
where $\expec[\# \text{ writes} \mid  E_1, S=b]$ with the new switching strategy can be calculated to be
\be
\begin{array}{ll}
{2ma}/{B}  &   \text{for }  \frac{B}{2} \leq b \leq B,\\
& \\
{a}/{b}  &   \text{for } \frac{B}{2}(1- \delta_1)   \leq b <  \frac{B}{2},\\
& \\
\frac{2a}{B(1-\delta_1)} \left( 1 +  \frac{(1- \delta_1)B - 2b}{B - 2b} \right) & \text{for }  0 \leq b <   \frac{B}{2}(1-\delta_1).
\end{array}
\ee
Substituting this in \eqref{eq:rew_ext_imp} and calculating the integral, we obtain
\be
\expec[\# \text{ writes} \mid  E_1]=\frac{a}{B}\left(3 + \ln\frac{1}{1-\delta_1}  + \frac{\delta_1}{1-\delta_1} \ln {\delta_1} \right).
\label{eq:del1_expec}
\ee
Using \eqref{eq:del1_expec} in \eqref{eq:rew_total} completes the proof for $m=1$.

\subsection{$2m$ Exterior Target Regions}
\emph{Coding Scheme}: Refer Figure \ref{fig:scheme2}. Writing on the interior regions is the same as before: For interior region $[x, x+D]$, repeatedly apply stimulus $(x+D-a/2)^+$. To write on exterior region $i, \ 1\leq i \leq 2m$:
\begin{itemize}
\item If $1\leq i \leq m$, write $1$ until the output falls in region $E_i$ or it is less than $1-\tfrac{a}{2} + \tfrac{B(i-\delta_i)}{2m}$,
In the first case, stop. In the second case, switch to writing $0$ until the output falls in the left bin of region $i$.

\item If $m+1 \leq i \leq 2m$, write $0$ until the output falls in region $E_i$ or it is greater than $\tfrac{a}{2} + \tfrac{B(i-1+ \delta_{2m+1-i})}{2m}$.
In the first case, stop. In the second case, switch to writing $1$ until the output falls in the right bin of region $i$.
\end{itemize}

\emph{Analysis}:
The rate calculation is straightforward, the only change from the previous subsection being that each of the exterior target regions  now represents $\log 2m$ bits of information. The average number of writes for an interior target region is $a/D$. For an exterior region, we calculate it separately for  each $E_i, i=1,\dots, m$ as follows. Note that by symmetry, $E_i$ and $E_{2m+1-i}$ have the same average cost. We have
\be
\begin{split}
\expec[\# \text{ writes} \mid \text{region } E_i]
= \int_{0}^{B} \expec[\# \text{ writes} \mid  E_i, S=b] \frac{1}{B} db
\end{split}
\label{eq:rew_ext_impi}
\ee
where $\expec[\# \text{ writes} \mid  E_i, S=b]$ can be calculated to be
\be
\begin{array}{ll}
{2ma}/{B}  &   \text{\small{for} }  \frac{Bi}{2m} \leq b \leq B,\\
& \\
\frac{2ma}{2mb- (i-1)B}  &   \text{\small{for} } \frac{B(i- \delta_i)}{2m}   \leq b <  \frac{Bi}{2m},\\
& \\
\frac{2ma}{B(1-\delta_i)} \left( 1 +  \frac{(i- \delta_i)B - 2mb}{iB - 2mb} \right) & \text{\small{for} }  \frac{B(i-1)}{2m} \leq b <   \frac{B(i-\delta_i)}{2m}, \\
& \\
\frac{2ma}{(i-\delta_i)B-2mb} + \frac{2ma}{B} &  \text{\small{for} }  0\leq b <   \frac{B(i-1)}{2m}.
\end{array}
\ee
Using this in \eqref{eq:rew_ext_imp} and calculating the integral, we obtain that for $1 \leq i \leq m$:
{\small{
\be
\expec[\# \text{ writes} \mid  E_i]=\frac{a}{B}\left(2m+1 + \ln\frac{i-\delta_i}{(1-\delta_i)^2}  - \frac{\delta_i}{1-\delta_i} \ln \frac{1}{\delta_i} \right).
\label{eq:deli_expec}
\ee
}}
The average number of write attempts for an exterior region is therefore
\ben
\frac{a}{B}\left[2m+1 + \frac{1}{m}\sum_{i=1}^{m} \left(\ln\frac{i-\delta_i}{(1-\delta_i)^2}  - \frac{\delta_i}{1-\delta_i} \ln \frac{1}{\delta_i} \right) \right].
\een
For each $i \in \{1, \ldots, m\}$, it is easily verified that the $\delta_i \in [0,1)$ that minimizes \eqref{eq:deli_expec} satisfies \eqref{eq:opt_deli_thm}.
This completes the proof.

\section{Conclusion} \label{sec:conc}
In a channel with unknown parameters (modeled by a hidden state), rewrites increase the capacity in two ways: 1) by mitigating the effect of write noise, and 2) by enabling the write controller to get progressively better estimates of the state.  For the uniform noise channel,  one of the key observations was that the hidden state does not affect coding in the interior region. This idea could be generalized to other channels where the output has bounded support.

There are many open questions to be explored. One is obtaining a capacity upper bound, which is challenging as we need to consider all adaptive input strategies.  The general capacity lower bound can be improved via a scheme that does simultaneous coding and estimation; the challenge here lies in analyzing such a coding scheme to get a computable expression for the achieved rate.  Another goal  is to modify the superposition scheme so that it robust to small amounts of read noise. As discussed at the end of Section \ref{sec:gen_lb},  the current scheme requires the reads to be highly accurate.

The channel model analyzed here is motivated by non-volatile memories such as Phase Change Memory and Resistive RAM. The coding schemes illustrate how information-theoretic techniques like superposition can be used to increase the storage density. Though the schemes presented are for analog storage channels, the  ideas can be extended to finite alphabet channels which arise in technologies such as Magnetic RAM \cite{mram05}.  A more sophisticated channel model for real memories is  one where the value written on the cell depends on the stimulus as well as the previous value stored in the cell.  Another interesting possibility is extending the model to account for stochastic variation of the cell contents over time, a phenomenon which is encountered in most memory technologies and which manifests as a read noise at the ``receiver''. We believe that developing efficient rewritable schemes for such realistic models will have a significant impact on several non-volatile memory technologies.

%

\appendix

\emph{Proof of Theorem \ref{thm:cont_ch_lb}}:

Fix an estimation period  $l \in \{0,\ldots, \lfloor\kappa -1 \rfloor\}$, an estimator $\hat{S}(l)$, a distribution $P_{U|\hat{S}(l)}$, and a function $f$ to generate the channel input $X=f(U, \hat{S}(l))$. This defines a joint distribution of $(S, \hat{S}(l), U, X, Y)$  in the set $\mc{P}$.

Construct a codebook consisting of $2^{nR_1}$ codewords, whose elements are picked i.i.d. according to $P_U$, the marginal distribution of the auxiliary random variable $U$. This codebook is partitioned in $2^{nR}$ bins where
\be R < I(U;Y) - I(U; \hat{S}(l)). \label{eq:R_final_cond} \ee
$R_1>R$ will be specified later.

The output space of each cell is divided into $\lfloor \kappa -l \rfloor$ target regions, as described in Section \ref{subset:awgn}  (see Figure \ref{fig:superpos}).

\emph{Encoding}: The message to be stored in the $n$-cell array consists of two parts $(m_1, m_2)$, where $m_1 \in \{1, \ldots, 2^{nR} \}$ and
$m_2 \in \{1, \ldots, (\lfloor \kappa -l \rfloor)^n \}$. Let $\hat{\mathbf{S}}(l)$ be the state estimate obtained using the first $l$ writes. To encode the first part of the message, we choose a codeword $\mathbf{U}$ from the $m_1$th bin such that the pair $(\mathbf{U}, \hat{\mathbf{S}}(l))$ is jointly typical \cite[Section 8.2]{CoverT} according to the distribution described by the following joint density:
\be
P_{U, \hat{S}(l)} (u, \hat{s}) = \int_{\mc{S}} P_S(s) P_{\hat{S}(l)|S}(\hat{s} | s)  P_{U |\hat{S}(l)} (u | \hat{s}) ds.
\label{eq:uy_joint}
\ee
From rate-distortion theory \cite{CoverT}, such a codeword $\mbf{U}$ can be found with high probability if
\be
R_1 - R > I(U; \hat{S}(l)).
\label{eq:quant_cond}
\ee
\eqref{eq:quant_cond} gives a lower bound on the minimum number of codewords in each bin ($2^{n(R_1-R)}$) required for successful encoding.
 The input stimulus $\mathbf{X}= \{ X_i\}_{i=1}^n$ is generated symbol by symbol as  $X_i = f(U_i, \hat{S}_i(l))$.

 The second part of the message is conveyed through superposition coding. For cell $i$, apply stimulus $X_i$ until the output falls in the appropriate target region.
For any realization of the input stimulus and state, the output is equally likely to fall in each of the target regions with probability $\tfrac{1}{ \lfloor \kappa -l \rfloor}$. Hence the average number of writes required after the estimation period is  $\lfloor \kappa -l \rfloor$, and the average total writes per cell is $l + \lfloor \kappa -l \rfloor$.

\emph{Decoding}: The decoder attempts to find a codeword $\hat{\mbf{U}}$ that is jointly typical with the stored sequence $\mbf{Y}$ according to \eqref{eq:uy_joint}. If there is a unique such codeword, its bin is decoded as the message $m_1$. The target region containing the output of each cell gives the message $m_2$.
The codeword $\mbf{U}$ can be successfully decoded if the rate of the codebook satisfies
\be
R_1 < I(U;Y).
\label{eq:ch_cond}
\ee
Combining \eqref{eq:quant_cond} and \eqref{eq:ch_cond}, we conclude that $\mbf{U}$ can be successfully encoded and decoded if \eqref{eq:R_final_cond}
 is satisfied.

We have thus shown that a total of $R + \log \lfloor \kappa - l \rfloor$ bits/cell can be reliably stored and decoded with average write cost $l + \lfloor \kappa - l \rfloor $   as long as $R$ satisfies \eqref{eq:R_final_cond}.

\IEEEtriggeratref{11}

\bibliographystyle{ieeetr}
\bibliography{rew_channels}
\end{document}